\pgfplotsset{compat=1.9} 
\newtheorem{thm}{Theorem}
\newtheorem{lem}[thm]{Lemma}
\newtheorem{prop}[thm]{Proposition}
\newtheorem{defn}[thm]{Definition}
\newcommand{\norm}[1]{\left\Vert#1\right\Vert}
\newcommand{\abs}[1]{\left\vert#1\right\vert}
\def\XXint#1#2#3{{\setbox0=\hbox{$#1{#2#3}{\int}$ }
\vcenter{\hbox{$#2#3$ }}\kern-.6125\wd0}}
\newcounter{lastnote}
\title{Covariant Fermionic Fields of Space-like Particles}
\author{Radhakrishnan Balu}
\affil{Army Research Laboratory Adelphi, MD, 21005-5069, USA \\
       radhakrishnan.balu.civ@mail.mil
      }  
\affil{Department of Mathematics \&
	Norbert Wiener Center for Harmonic Analysis and Applications, 
	University of Maryland,
	College Park, MD 20742.\\
	rbalu@umd.edu}          
\date{Received: date / Accepted: \today}
\begin{document}
\maketitle

\begin{abstract}
 We develop covariant fermionic fields of space-like particles. As an application of the formalism we discuss the example of superluminous tachyons with imaginary rest mass and spin 1/2 forming fermionic ensembles that is relativistically covariant. We use the tachyonic single particle Hilbert space to build the quantum stochastic fields as this notion is more consensual, instability in vacuum quantum field than faster than light particles, in field theory with applications to superconductivity via tachyonic condensation. We also establish the localizability and as a consequence the causality of tachyons (in terms of anticommutators) using the systems of imprimitivity framework and derive the Dirac equation for the space-like particle that has an additional helicity term.  
\end{abstract}

\section{Introduction}
\label{intro}
 In our earlier work we treated covariant Quantum Fields via Lorentz Group Representation of Weyl Operators \cite {Rad2020}, \cite {Rad2020b} constructed Fock spaces for massless Weyl fermions and covariant white noises. Here, we develop methods for second quantizing imaginary mass fermions conforming to special theory of relativity that are space-like particles, that is based on systems of imprimitivity, to field theoretic context. Systems of imprimitivity \cite {WignerLz1939}, \cite{Mackey1963}, \cite {Varadarajan1985} is a powerful characterization of dynamical systems, when the configuration space of a quantum system is described by a group, from which infinitesimal forms in terms of differential equations ($Shr\ddot{o}dinger$, Heisenberg, and Dirac etc), localizability, and the canonical commutation relations can be derived \cite {Wigner1949} and \cite {Rad2019}.  In Varadarajan{'}s work \cite {Varadarajan1985} space-like particles are not considered as imaginary mass particles with speed greater than light and the non existence of finite unitary representation for the little group created a lack of interest in general. However, there are some recent work in studying tachyons \cite {Schwartz2016} and overcoming the group representation problem that inspired our work. Let us look at the 2 + 1 spacetime case considered in this work as the little group in this case is O(2, 1) with the Lie algebra generators satisfying the following conditions:
\begin {equation}
 [J_z, K_x] = -ik_y; [J_z, K_y] = -ik_x; [K_y, K_x] = -iJ_z;
\end {equation}
Where the generators are given by the Pauli matrices as:
\begin {equation}
J_z = 1/2 \sigma_3; K_x = i/2 \sigma_1; K_y = i/2 \sigma_2;
\end {equation}

An operator of the form $U = e^{it(\alpha J_z + \beta K_x + \gamma K_y)}$ is in general not unitary. However, as Schwartz \cite {Schwartz2016} observed the relation 
\begin {align}
U^\dag &= U \sigma_3 U^{-1} \text { in conjunction with the metric } \\ 
\langle \psi, \sigma_3\phi \rangle \rightarrow \langle U\psi, \sigma_3 U\phi \rangle &= \langle \psi, U^\dag \sigma_3 U \phi \rangle = \langle \psi, \sigma_3\phi \rangle.
\end {align}
Another way to look at the above expression is to use $\sqrt{\sigma_3}$ which is also unitary and can be absorbed in to U as
\begin {align}
\langle U\psi, \sigma_3 U\phi \rangle &= \langle \sqrt{\sigma_3}^\dag U\psi, \sqrt{\sigma_3} U\phi \rangle. \\
\sqrt{\sigma_3} &= \begin{bmatrix} 1 & 0 \\ 0 & i \end{bmatrix}.
\end {align}
Borrowing ideas from quantum information processing it is like rotating all the components of the fibers, each fiber would be represented by two qubits, by the same angle that leaves the systems invariant.

As we will see later a family of metrics including the above one arise naturally in describing the covariant spinor fields in our framework. Formally, a spinor field is a map from a G-space where Poincar$\grave{e}$ group acts to the spin group representation. Schwartz considers the operator $\sigma_3$ as helicity and as a way to distinguish particles against anti-particles as the mass shell hyperbola in the case of space-like particles is a single sheet one. We will take this consideration in building our fiber bundle to describe the Hibert space of the space-like particle.Our frame is based on orthomoduar lattices, quantum logic, where the above inner product can arise as the duality $\eta$ which is a $\theta-$bilinear form satisfying:
\begin {align}
\langle x_1 + x_2, y \rangle &= \langle x_1, y \rangle + \langle x_2, y \rangle. \\
\langle x_1 , y_1 + y_2 \rangle &= \langle x, y_1 \rangle + \langle x, y_2 \rangle. \\
\langle cx, dy \rangle &= c \langle c, y \rangle d^\theta.
\end {align}
To elaborate further on this, by choosing different values of $\theta$ different inner product can be defined. A polarity $\eta$ is duality which is also involutive, $(\eta(\eta(x)) = x)$, and orthocomplemented lattices are endowed with a polarity. Finally, we note that covariant metrics are required to describe spinor fields, as they are not topological, of both superluminous (faster than light) and subluminous (slower than light) particles.

Our quantization is based on SI and an important theorem by Mackey that characterizes such systems in terms of induced representations, and so let us recollect key notions in Clifford algebras, spinor fields, and Schwartz spaces \cite {Varadarajan1985} before discussing our main results on covariant tachyonic fields and localizability of the particles by allowing imaginary values for momentum \cite {Jent2012}.

\section {Little groups (stabilizer subgroups)}
Some of the different systems of imprimitivity that live on the orbits of the stabilizer subgroups are described below. It is good to keep in mind the picture that SI is an irreducible unitary representation of Poincar$\grave{e}$ group $\mathscr{P}^+$ induced from the representation of a subgroup such as $SO_3$, which is a subgroup of homogeneous Lorentz, as $(U_m(g)\psi)(k) = e^{i\{k,g\}}\psi(R^{-1}_mk)$ where g belongs to the $\mathscr{R}^4$ portion of the Poincar$\grave{e}$ group, m is a member of the rotation group, and the duality between the configuration space $\mathbb{R}^4$ and the momentum space $\mathbb{P}^4$ is expressed using the character the irreducible representation of the group $\mathbb{R}^4$ as:
\begin {align} \label {eq: poissonBr}
\{k,g\} &= k_0 g_0 - k_1 g_1 - k_2 g_2 - k_3 g_3, p \in \mathbb{P}^4. \\
\hat{p}:x &\rightarrow e^{i\{k,g\}}. \\
\{Lx, Lp \} &= \{ x, p \}. \\\
\hat{p}(L^{-1}x) &= \hat{Lp}(x).
\end {align}
In the above L is a matrix representation of Lorentz group acting on $\mathbb{R}^4$ as well as $\mathbb{P}^4$ and it is easy to see that $p \rightarrow Lp$ is the adjoint of L action on $\mathbb{P}^4$. The $\mathbb{R}^4$ space is called the configuration space and the dual $\mathbb{P}^4$ is the momentum space of a relativistic quantum particle.

The stabilizer subgroup of the Poincar$\grave{e}$ group $\mathscr{P}^+$ (Space-like particle) can be described as follows: \cite {Kim1991}:
The Lorentz frame in which the walker is at rest has momentum proportional to (0,1,0,0) and the little group is
again SO(3, 1) and this time the rotations will change the helicity. 
\
\section {Smooth Orbits in Momentum Space}
We need to described few ingredients to construct the Hilbert space of space-like fermions namely, the fiber bundle, the fiber vector space, an inner product for the fibers, and an invariant measure. 
The 3+1 spacetime Lorentz group $\hat{O}(3,1)$-orbits  of the momentum space $\mathscr{R}^4$, where the systems of imprimitivity established will live, described by the symmetry $\hat{O}(3,1)\rtimes\mathbb{R}^4$. The orbits have an invariant measure $\alpha^+_m$ whose existence is guaranteed as the groups and the stabilizer groups concerned are unimodular and in fact it is the Lorentz invariant measure $\frac{dp}{p_0}$ for the case of forward mass hyperboloid.  The orbits are defined as:
\begin {align}
Y^{+,1/2}_{im} &= \{p: p^2_0 - p^2_1 - p^2_2 - p^2_3 = -m^2, p_0 > 0\}, \text{\color{blue} single sheet hyperboloid}.
\end {align}
Each of these orbits are invariant with respect to $\hat{O}(3,1)$ and let us consider the stabilizer subgroup of the first orbit at p=(0,1,0,0). Now, assuming that the spin of the particle is 1/2 and mass $m > 0$  let us define the corresponding fiber bundles (vector) for the positive mass hyperboloid that corresponds to the positive-energy states by building the total space as a product of the orbits and the group $SL(2, C)$.
\begin {align} \label {eq: bundle}
\hat{B}^{+,1/2}_{im} &= \{(p,v) \text{   }p\in{\hat{Y}^{+1/2}_{im}, }\text{   }v\in\mathscr{C}^4,\sum_{r = 0}^3 p_r \gamma_r v = mv\}. \\
\hat{\pi} &: (p,v) \rightarrow {p}. \text{  Projection from the total space }\hat{B}^{+,2}_{im} \text{ to the base }\hat{Y}^{+,1/2}_{im}.
\end {align}
It is easy to see that if $(p, v) \in B_{im}^{+,1/2}$ then so is also $(\delta(h)p, S(h^{*-1})v)$. Thus, we have the following Poincar$\grave{e}$ group symmetric action on the bundle that encodes spinors into the fibers:
\begin {equation}
h,(p,v) \rightarrow (p,v)^h = (\delta(h)p, S(h^{*-1})v).
\end {equation}

\begin {prop} There is a fiber representation of spin 1/2 space-like particles with spinor fields supporting a family of covariant inner products.

For $m > 0$ the fiber of $B_{im}^{+,1/2}$ at $p^(m) = (0, m, 0, 0)$ is spanned by the vectors $\{e_1 + ie_4, e_2 + ie_3\}$ and so $B_{im}^{+,1/2}$ is two dimensional with the covariant form $-im^{-1}\langle\gamma_3 v,v\rangle$. This positive definite due to skew symmetric $\gamma_3$ and $-im^{-1}\langle\gamma_3 v,v\rangle$ is purely imaginary. This 1-form is covariant with respect to the spin representation $S(h)^*\gamma_3 S(h) = \gamma_3$. As we mentioned earlier, both superluminal and subluminal particles support such a family of covariant 1-forms and thus metrics in describing their spinor fields. The same set of arguments can be applied to $B_{-im}^{+,1/2}$ implying that the bundles $B_{-im}^{+,1/2}$ also converge to $B_0^+$.
The endomorphism (chirality or helicity operator) $\Gamma = i\gamma_0\gamma_1\gamma_2\gamma_3$ transforms $B_{im}^{+,1/2}(p) \rightarrow B_{-im}^{+,1/2}(p), \forall p \in X_{im}^+, m > 0$ as it anticommutes with all the gammas. In the limit $\Gamma$ leaves the fibers of $B_0^+$ invariant leading to higher degeneracies with $\Gamma = \begin{bmatrix} 1 & 0 \\ 0 & -1 \end{bmatrix}$. This means $\Gamma$ commutes with all of $S(h)$ implying that $(p, v) \in B_0^+ \Rightarrow (p, \Gamma v) \in B_0^+$. 

$\Gamma$ has eigen values $\pm 1$ at fiber (0, 1, 0, 0) and hence is true of all fibers. The stability group $E^*$ at (0, 1, 0, 0) given by the generators mentioned during the introduction.
is still invariant and we have $E_2$ is the stabiliser group at (0,1,0,0).

Now, we can define the states of the space-like particles on the Hilbert space $\hat{\mathscr{H}}^{+,1/2}_{im}$, square integrable functions on Borel sections of the bundle $\hat{B}^{+,1/2}_{im} = \{(p,v) : (p,v) \in B_{im}^{+,1/2}, \Gamma v = \mp v\}$.

with respect to the invariant measure $\beta^{+,1}_{im}$, whose norm induced by the inner product is given below:
\begin {equation} \label {eq: section}
 \norm{\phi}^2 = \int_{X^+_m}p_3^{-1}\langle\phi{p},\phi{p}\rangle.{d\beta}^{+,1/2}_{im}(p).
\end {equation}
The invariant measure and the induced representation of the Poincar$\grave{e}$ group from that of the Weyl fermion are given below:
\begin {align}
{d\beta}^{+,1/1}_{im}(p) &= \frac {dp_1 dp_2 dp_3} {2(-m^2 + p_1^2 + p_2^2 + p_3^2)}, m > 0, \text{imaginary values for p are allowed}. \\
(U_{h,x}\phi)(p) &= \text{exp i}\{x, p\} \phi (\delta(h)^{-1}p)^h.
\end {align}
\end {prop}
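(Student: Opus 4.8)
The plan is to realize each fiber as the solution space of the defining first-order constraint at the base point, equip it with the stated one-form, and then verify positivity, covariance, and the chirality action before assembling the global Hilbert space. First I would evaluate the constraint $\sum_{r=0}^3 p_r \gamma_r v = im\,v$ (the imaginary mass appropriate to the bundle $B_{im}^{+,1/2}$) at the reference momentum $p^{(m)} = (0,m,0,0)$, where it collapses to the single eigenvalue equation $\gamma_1 v = i v$. Since the spatial gamma satisfies $\gamma_1^2 = -I$ and is traceless, its spectrum is $\{+i,-i\}$ with each eigenspace two-dimensional; a direct computation then confirms that $\{e_1 + ie_4,\, e_2 + ie_3\}$ is a basis of the $+i$-eigenspace, establishing the two-dimensionality of $B_{im}^{+,1/2}$ asserted in the statement.

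Next I would check the three properties of the covariant form $\omega(v) = -im^{-1}\langle \gamma_3 v, v\rangle$. Its reality, and the claim that $\langle\gamma_3 v, v\rangle$ is purely imaginary, follow from the anti-Hermiticity of the spatial gamma matrix: $\overline{\langle\gamma_3 v,v\rangle} = \langle v, \gamma_3 v\rangle = -\langle \gamma_3 v, v\rangle$, so $\langle\gamma_3 v,v\rangle \in i\reals$ and $\omega$ is real. Positive-definiteness I would read off directly on the fiber, where the momentum-aligned gamma acts as multiplication by $i$, turning $\omega(v)$ into the positive multiple $m^{-1}\langle v,v\rangle$ of the Hermitian norm. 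Covariance is the identity $S(h)^*\gamma_3 S(h) = \gamma_3$ for $h$ in the little group, which yields $\langle \gamma_3 S(h)v, S(h)v\rangle = \langle\gamma_3 v, v\rangle$ and hence invariance of $\omega$ under the fiber action; I would deduce it from the defining intertwining property of the spin representation together with the fact that $h$ fixes the reference momentum. This is precisely the spinorial shadow of Schwartz's metric relation $\langle U\psi, \sigma_3 U\phi\rangle = \langle \psi, \sigma_3\phi\rangle$ recalled in the introduction.

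For the chirality structure I would use $\Gamma = i\gamma_0\gamma_1\gamma_2\gamma_3$ together with the anticommutation $\Gamma\gamma_r = -\gamma_r\Gamma$: applying $\Gamma$ to a solution of $\sum_r p_r\gamma_r v = im\,v$ produces a solution of $\sum_r p_r\gamma_r(\Gamma v) = -im\,(\Gamma v)$, which is exactly the claimed intertwiner $B_{im}^{+,1/2}(p) \to B_{-im}^{+,1/2}(p)$. Since $\Gamma^2 = I$ its eigenvalues are $\pm1$, and because $\Gamma$ commutes with every spin generator $\tfrac{i}{4}[\gamma_\mu,\gamma_\nu]$ it commutes with the full representation $S(h)$ in the massless limit, so that $B_0^+$ splits into the two $\Gamma$-eigenbundles responsible for the higher degeneracy. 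I would then identify the stabilizer of $(0,1,0,0)$ as the group generated by the $J_z, K_x, K_y$ of the introduction, i.e. the $O(2,1)$ little group, and note that both $\omega$ and $\Gamma$ are preserved by it.

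Finally I would assemble $\hat{\mathscr{H}}^{+,1/2}_{im}$ as the space of Borel sections of the chosen $\Gamma$-eigen-subbundle that are square integrable for the norm (\ref{eq: section}), checking that the measure $\frac{dp_1 dp_2 dp_3}{2(-m^2 + p_1^2 + p_2^2 + p_3^2)}$ is the Lorentz-invariant measure on the single-sheet hyperboloid and that $(U_{h,x}\phi)(p) = e^{i\{x,p\}}\phi(\delta(h)^{-1}p)^h$ preserves it, giving the induced representation. The main obstacle is the tension between positivity and covariance: the little group $O(2,1)$ is non-compact and admits no finite-dimensional unitary representation, so the fiber metric cannot be simultaneously positive-definite and invariant in the naive sense. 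The resolution, and the technical crux of the argument, is that $\omega$ is covariant as an intertwining form rather than an invariant scalar, while positivity holds pointwise on each fiber through the eigenvalue normalization; reconciling these so that the global norm (\ref{eq: section}) is genuinely invariant and $U_{h,x}$ genuinely unitary is where the real work lies.
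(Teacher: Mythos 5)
Your proposal follows essentially the same route as the paper, which offers no separate proof beyond the assertions embedded in the proposition itself: identify the fiber as the eigenspace of the momentum-aligned gamma matrix at the reference point, check positivity and covariance of the one-form via the anti-Hermitian gamma and the intertwining relation $S(h)^*\gamma_3 S(h)=\gamma_3$, use the anticommutation of $\Gamma$ with the gammas for the map $B_{im}^{+,1/2}\rightarrow B_{-im}^{+,1/2}$, and assemble the $L^2$ space of Borel sections with the invariant measure. Your version in fact supplies computations the paper leaves implicit --- notably reading the fiber constraint as $\sum_r p_r\gamma_r v = im\,v$ so that $\{e_1+ie_4,\,e_2+ie_3\}$ genuinely spans the relevant eigenspace, and resolving the positivity/covariance tension by treating the form as covariant (intertwined) rather than as an invariant metric for the non-compact little group --- which is consistent with the paper's intent.
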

\
We need Schwartz spaces, and their duals the tempered distributions
to guarantee Fourier transforms, to move with ease between configuration and momentum
space descriptions for Dirac equation as well as localizability and so let us define them.
\begin {defn} Let V be an n-dimensional Euclidean space with the positive definite product $\langle .,.\rangle$ and Lebesgue measure dv. for any translation invariant differential operator D and any complex polynomial q on V, the function $\phi \rightarrow sup_{v \in V} \abs{q(v)D(\phi)(v)}$ is a seminorm on $C_c^\infty(V)$ and the collection of these seminorms induces a locally convex topology for $C_c^\infty(V)$. Its completion may be identified with $\mathscr{S}(V)$, the space of $C^\infty$ functions on V for which $sup_{v\in V} \abs{q(v)D(\phi)(v)} < \infty$ for all q and D. Intuitively, it is the function space of all infinitely differentiable functions that are rapidly decreasing at infinity along with all partial derivatives 

Let $\{v_1, \dots, v_n\}$ be an orthonormal basis for V and $D_{v_1, \dots, v_n} = (\partial/\partial x_1)^{\nu_1} \dots (\partial/\partial x_n)^{\nu_n}$ linearly span the algebra of translation invariant differential operators, and $x_1, \dots, x_n$ be the linear coordinate functions on V associated with the chosen basis. A topology on $\mathscr{S}(V)$ can be induced by the collection of seminorms \\
$sup_{x_1 \dots , x_n} abs{(1+x_1^2 + \dots + x_n^2)^k(D_{v_1, \dots, v_n}\phi)(x_1, \dots , x_n)}$ for various k=0,1,2,..., and \\
$\nu_1, \dots, \nu_n \ge 0$.  A tempered distribution E is a complex valued linear functional on $C_c^\infty(V)$ which is continuous in the topology defined. By extending these functionals to $\mathscr{S}(V)$ we may regard a tempered distribution as a continuous linear functional on the Schwartz space $\mathscr{S}(V)$.
\end {defn}
Fourier transform is an an automorphism on Schwartz space as
\begin {equation}
\hat{\phi(x)} = (2\pi)^{-n/2}\int_V exp[-i \langle x,v\rangle]\phi(v)dv, \forall \phi \in \mathscr{S}(V), x \in V.
\end {equation}
The Fourier transform of a tempered distribution $E(\phi \rightarrow E(\phi), \phi \in \mathscr{S}(V))$ is given by $\hat{E}(\phi) = E(\hat{\phi})$. We denote the dual space of $\mathscr{S}(V)$ Schwartz functions that consists of the tempered distributions as $\mathscr{S}'(V)$ and they form the Gelfand nuclear triple $\mathscr{S}(\mathscr{H}_0^{+2}) \subset \mathscr{H}_0^{+2} \subset \mathscr{S}'(\mathscr{H}_0^{+2})$.

Let us now state and establish the main result for the case of massive Weyl fermion with space-like momentum from the representation of a subgroup (space-like little group) of Poincar\`e.

\begin {thm} Space-like Weyl representation of Poincar$\grave{e}$ group is a transitive system of imprimitivity. 
\end {thm}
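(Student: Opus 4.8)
The plan is to verify directly the three defining ingredients of a Mackey system of imprimitivity — a transitive $G$-space, a unitary representation $U$, and a projection-valued measure $P$ satisfying the covariance relation $U_g P(E) U_g^{-1} = P(g\cdot E)$ — with $G = \mathscr{P}^+$, the $G$-space taken to be the orbit $\hat{Y}^{+,1/2}_{im}$, and $U$ the induced representation $U_{h,x}$ already exhibited in the Proposition. Once $P$ is in hand and covariance is checked, the pair $(U,P)$ is by definition a system of imprimitivity, and Mackey's imprimitivity theorem identifies it with the representation induced from the space-like little group; transitivity is then the statement that $\hat{Y}^{+,1/2}_{im}$ is a single orbit.

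First I would settle transitivity. Since the translation subgroup $\mathbb{R}^4$ acts trivially on momentum space, the action of $\mathscr{P}^+$ on $\hat{Y}^{+,1/2}_{im}$ factors through the Lorentz action $p \mapsto \delta(h)p$. The group $\hat{O}(3,1)$ acts transitively on each sheet of the level set $\{p_0^2 - p_1^2 - p_2^2 - p_3^2 = -m^2\}$, and because the space-like shell is a single sheet the orbit of the reference momentum $p^{(m)} = (0,m,0,0)$ is all of $\hat{Y}^{+,1/2}_{im}$. The stabilizer of $p^{(m)}$ is the little group $E_2$ recorded in the Proposition, so $\hat{Y}^{+,1/2}_{im}\cong \hat{O}(3,1)/E_2$ as a $G$-space, which is transitive.

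Next I would define the projection-valued measure by multiplication: for a Borel set $E\subseteq\hat{Y}^{+,1/2}_{im}$ set $(P(E)\phi)(p)=\chi_E(p)\phi(p)$ for $\phi\in\hat{\mathscr{H}}^{+,1/2}_{im}$. That $P$ is a PVM is immediate from $\chi_E\chi_F=\chi_{E\cap F}$ together with countable additivity of characteristic functions. The covariance relation is then a short computation from the explicit form $(U_{h,x}\phi)(p)=\exp(i\{x,p\})\,\phi(\delta(h)^{-1}p)^h$: since $\chi_E(\delta(h)^{-1}p)$ is a scalar it commutes with the fiberwise operation $(\cdot)^h$ and with the translation phase, so conjugating the multiplication operator $\chi_E$ by $U_{h,x}$ merely shifts its argument, leaving $(U_{h,x}P(E)U_{h,x}^{-1}\phi)(p)=\chi_E(\delta(h)^{-1}p)\phi(p)=\chi_{\delta(h)E}(p)\phi(p)$. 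This is precisely $U_{h,x}P(E)U_{h,x}^{-1}=P(\delta(h)E)$, the imprimitivity relation for the $G$-action on $\hat{Y}^{+,1/2}_{im}$.

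The main obstacle is not the geometry but the unitarity of $U$, because the space-like little group is non-compact and, as emphasized in the introduction, its spinor representation $S(h^{*-1})$ is not unitary for the naive Hermitian form. The step requiring care is therefore to check that $U$ preserves the norm $\norm{\phi}^2=\int p_3^{-1}\langle\phi(p),\phi(p)\rangle\,d\beta^{+,1/2}_{im}(p)$ built from the covariant $1$-form $-im^{-1}\langle\gamma_3 v,v\rangle$ of the Proposition. Here the fiberwise invariance $S(h)^*\gamma_3 S(h)=\gamma_3$ guarantees that the pairing is preserved under $(\cdot)^h$, while the Lorentz invariance of $\beta^{+,1/2}_{im}$ — the analog of $dp/p_0$, whose existence is assured by the unimodularity of the groups involved — absorbs the change of variables $p\mapsto\delta(h)^{-1}p$; together these yield $\norm{U_{h,x}\phi}=\norm{\phi}$. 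With $U$ unitary in this covariant metric, $(U,P)$ is a system of imprimitivity, and since the underlying $G$-space was shown transitive, invoking Mackey's theorem identifies it with the representation induced from the little group, completing the argument.
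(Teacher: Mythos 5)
Your single-particle argument is sound and, at that level, it follows the same underlying strategy as the paper: both identify the single-sheet hyperboloid with the homogeneous space $\hat{O}(3,1)/E_2$ via the stabilizer at a reference space-like momentum, both take the projection valued measure to be multiplication by characteristic functions, and both rely on the covariant form $S(h)^*\gamma_3 S(h)=\gamma_3$ together with the invariant measure to make the induced representation unitary despite the non-compactness of the little group. Where you differ is in method: you verify the imprimitivity relation $U_{h,x}P(E)U_{h,x}^{-1}=P(\delta(h)E)$ by direct computation, whereas the paper runs the Mackey induction machinery explicitly --- it chooses a Borel section $c$ of $\mathscr{P}/\mathscr{G}$, sets $a(g)=c(\beta(g))^{-1}$ and $b(g)=m(a(g))$, and produces the strict cocycle $\phi(g_1,g_2)=b(g_1g_2)b(g_2)^{-1}$ from which the SI relation is read off. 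Your route is more self-contained and makes the covariance check transparent; the paper's route produces the explicit cocycle it needs for the constructions that follow.

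There is, however, a gap in scope. The paper's proof does not stop at the single-particle Hilbert space: the object it calls the ``Weyl representation'' is built on the bosonic Fock space $\Gamma_s(\hat{\mathscr{H}}^{+,1/2}_{im})$ from a first-order cocycle $v(gh)=v(g)+U_g v(h)$, via the Weyl operators $V_g=W_g(v(g),U_g)$ satisfying the projective commutation relation $V_gV_h=e^{i\,\mathrm{Im}\langle v_g,U_gv_h\rangle}V_hV_g$; the creation and annihilation operators are then extracted from the Stone generators, and the Hudson--Parthasarathy unification $dB=(-1)^\Lambda dA$ is invoked to pass to the fermionic fields. Your proposal never constructs this layer. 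If the theorem is read as a statement about the induced single-particle representation, your proof is essentially complete and arguably tighter than the paper's; but as the paper intends it, you would still need to exhibit the cocycle $v$, form the Weyl operators, and check that the resulting field operators inherit the covariance of $U_{h,x}$ before the system of imprimitivity on Fock space is actually in hand.
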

\begin {proof}
Let $g:\mathscr{G} \rightarrow U_g(\hat{\mathscr{H}}^{+,1/2}_{im})$ be a homomorphism from the two dimensional Euclidean group $\mathscr{G} = E_2$ to the unitary representation of the group in $\hat{\mathscr{H}}^{+,1/2}_{im}$. We note that it is a stabilizer subgroup which is also closed at the momentum (0,0,m,0) and so $H/\mathscr{G}$ is a transitive space.

Consider a map $v(g):\mathscr{G} \rightarrow \hat{\mathscr{H}}^{+,1/2}_{im}$ satisfying the first order cocycle relation $v(gh) = v(g) + U_g v(h), g,h \in \mathscr{G}$.
An example of such a map is the following: \cite {KP1992} 
\begin {align*}
\mathscr{H} &= \oplus_{j=0}^\infty \mathscr{H}_j. \\
H &= 1 \oplus \oplus_{j=1}^\infty H_j. \\
U_t &= e^{-itH}, t \in \mathscr{G}. \\
v(t) &= tu_0 \oplus \oplus_{j=1}^\infty (e^{-itH_j} u_j
- u_j).
\end {align*}

Now, we can define the Weyl operator $V_g = W_g (v(g), U_g)$ where $g \in \mathscr{G}$ in the Fock space $\Gamma_s(\hat{\mathscr{H}}^{+,1/2}_{im})$. 

 This is a projective unitary representation satisfying the commutator relation $V_g V_h = e^{iIm\langle v_g, U_g v_h \rangle} V_h V_g$ and let us denote the homomorphism from $\mathscr{G}$ to $V_g$ by m. This guarantees a map b that satisfies $b(gh) = b(g)m(h), g \in \mathscr{P}, h \in \mathscr{G}$ and such map can be constructed by considering the map
$c(x \rightarrow c(x)$ as Borel section of $\mathscr{P} / \mathscr{G}$ (the choice of this section not a canonical one but immaterial to our purpose here) with $c(x_0) = e$. The map
$\beta$ maps $g \in \mathscr{P} \rightarrow g \mathscr{G}$
\begin {align}
a(g) &= c(\beta (g))^{-1}. \\
b(g) &= m(a(g)). 
\end {align}
Then the strict cocycle $\phi$ satisfies $\phi(g_1, g_2) = b(g_1 g_2)b(g_2)^{-1}$.

We can now set the SI relation using the above cocycle as follows:
\begin {align*}
U_{h, x}\phi(p) &= exp i\{x, p\} \phi(g, g^{-1}x) f(g^{-1}x),  f\in\mathscr{H} \\
&\text {  character representation is defined in equation } \eqref{eq: poissonBr}.\\
P_E F &= \chi_E f \text { Position  operator}. \\
\end {align*}

We can construct the conjugate pair of field operators for the Fock space $\Gamma_s(\hat{\mathscr{H}}^{+,1/2}_0))$ as follows:

Let $p_g$ be the stone generator for the family of operators $P_{gt,p}, g \in \mathscr{P}, t \in \mathbb{R}$ and q(g) = p(ig) and we get the creation and annihilation operators as $a(g)^\dag = \frac {1}{2} ( q(g) - i p(g)$ and  $a(g) = \frac {1}{2} ( q(g) + i p(g)$.

Strictly speaking we need to build fermionic Fock space and set up the corresponding annihilation and creation operators. Instead, we have constructed the bosonic Fock space, as there is no counterpart for Weyl second quantized operators in fermionic Fock space,  and using the unification of the \cite {Hudson1986} two spaces, $dB= (-1)^\Lambda dA$, we can construct the fermionic processes. This involves fashioning reflection processes and we leave out the details as our main focus here is the white noise process.
$\blacksquare$
\end {proof}

\section {Localizability}
We establish the localizability of tachyons of spin 1/2 by constructing a projection valued measure (PVM) as part of the SI relation. As our main theorem establishes an SI that comes with a PVM on spacetime $\mathscr{R}^4$. We would like to have a PVM on $\mathscr{R}^3$ to signify localization on space only degrees of freedom. We modify the proof from Varadarajan{'}s book for the particle with real mass to tachyons of theorem 9.16. This theorem is basis for establishing the fact that photons are not localizable. The main line of arguments run like this: we consider a Lorentz transformatiom from an inertial frame that leads to a configuration separated space-only coordinates. We can denote that unitary transformation by $U_{h,a}, a = a_0, \hat{a}$ and call this subgroup of Poincar$\grave{e}$ as M. As series of lemmas on equivalent induced representations on M and necessary and sufficient conditions for the existence of PVM on $\mathscr{R}^3$ culminates in the following theorem.
\begin {thm} \cite {Varadarajan1985} Let the representation U associated with a system $\mathscr{S}$ be defined by a pair $(X^+_m, \pi)$, where $m \ge 0$, and $\pi$ is a representation of the stability subgroup of $L^*$ at the point p = (m, 0, 0 ,0) when m > 0 and (1, 0, 0, 1). The system is localizable when m > 0. If m = 0, $\mathscr{S}$ is localizable if and only if there exists a representation $\pi'$ of $K^*$ such that $\pi$ and $\pi'$ have equivalent restrictions to the group $T'$ of all diagonal matrices of $K^*$.
\end {thm}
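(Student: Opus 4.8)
The plan is to reduce localizability to Mackey's imprimitivity theorem \cite{Mackey1963} for the Euclidean subgroup of $\mathscr{P}^+$, and then to read off the stated criterion from an orbit decomposition of the restricted representation. First I would fix the Euclidean subgroup $\mathscr{E} = \reals^3 \rtimes K^*$ generated by the spatial translations and the rotations $K^* = SU(2)$, and recall that $\mathscr{S}$ is localizable precisely when there is a projection valued measure $P$ on $\reals^3$ with $U_{(R,\hat a)} P(B) U_{(R,\hat a)}^{-1} = P(RB + \hat a)$ for all $(R,\hat a)\in\mathscr{E}$. Since $\reals^3 \cong \mathscr{E}/K^*$ with $K^*$ the stabiliser of the spatial origin, the imprimitivity theorem says that such a covariant $P$ exists if and only if $U\big|_{\mathscr{E}}$ is unitarily equivalent to a representation induced from $K^*$, i.e. $U\big|_{\mathscr{E}} \cong \mathrm{Ind}_{K^*}^{\mathscr{E}}\sigma$ for some unitary representation $\sigma$ of $K^*$. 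The whole theorem thus becomes the question: for which $(X^+_m,\pi)$ is $U\big|_{\mathscr{E}}$ of this induced form?

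Next I would compute $U\big|_{\mathscr{E}}$ explicitly from the Wigner model on $L^2(X^+_m,\mathscr{H}_\pi)$ by pushing forward along the momentum projection $\rho: p \mapsto (p_1,p_2,p_3)$. For $m>0$ the map $\rho$ is a rotation equivariant Borel isomorphism of $X^+_m$ onto $\reals^3$ carrying $\alpha^+_m$ into a measure in the Lebesgue class, and --- this is the point that makes the massive case immediate --- for a \emph{pure rotation} $R\in K^*$ the Wigner cocycle reduces to $R$ itself, since $R$ fixes the rest frame $(m,0,0,0)$. Because $\pi$ is here already a representation of the stability group $K^*=SU(2)$, it follows that $U\big|_{\mathscr{E}}$ acts by the geometric rotation on $\reals^3$ together with $\pi(R)$ on a constant fibre, that is $U\big|_{\mathscr{E}} \cong \mathrm{Ind}_{K^*}^{\mathscr{E}}\pi$, so the canonical multiplication PVM works and $\mathscr{S}$ is localizable. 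This disposes of the first assertion.

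For $m=0$ the same $\rho$ is a Borel isomorphism of $X^+_0$ onto $\reals^3\setminus\{0\}$, again with Lebesgue measure class, but now the fibre carries a representation $\pi$ of the null little group $E^*$ rather than of $K^*$. I would decompose $\mathscr{E}$ over its orbits in $\reals^3$, the spheres $\{\abs{\hat p}=r\}$, whose base point $(0,0,r)$ has $\mathscr{E}$-rotation stabiliser exactly the diagonal torus $T'\subset K^*$. Tracking the standard transformations that carry the reference null vector $(1,0,0,1)$ to a given $\hat p$, the Wigner element attached to a stabilising rotation lands in the diagonal $T'\subset E^*$ and acts by $\pi\big|_{T'}$ \emph{independently of the radius} $r$; hence $U\big|_{\mathscr{E}}$ disintegrates as a direct integral over $r>0$ of $\mathscr{E}$-representations all of whose fibre $T'$-content equals $\pi\big|_{T'}$. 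On the other side $\mathrm{Ind}_{K^*}^{\mathscr{E}}\pi'$ has, at every radius, fibre $T'$-content equal to $\pi'\big|_{T'}$. Comparing the two direct integrals --- the measure classes already agree --- yields $U\big|_{\mathscr{E}}\cong \mathrm{Ind}_{K^*}^{\mathscr{E}}\pi'$ if and only if $\pi\big|_{T'}\cong \pi'\big|_{T'}$, which is exactly the stated criterion.

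The step I expect to be the main obstacle is the explicit control of the Wigner cocycle on the rotation stabiliser in the massless case: one must verify both that the attached little group element genuinely lies in the diagonal torus $T'$, and not in the unipotent ``translation'' part of $E^*$, and that its $\pi$-action is radius independent, so that the direct integral really has constant $T'$-isotype. Once this is secured the representation theoretic content is immediate --- and it is precisely what obstructs localizability of the physical photon, since $\pi\big|_{T'}$ is then a lone nonzero helicity weight, whereas every $\pi'\big|_{T'}$ with $\pi'$ a representation of $K^*=SU(2)$ is a symmetric consecutive string of weights and can never be a single nonzero character.
$\blacksquare$
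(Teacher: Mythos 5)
Your proposal is correct and follows essentially the same route as the paper, which simply defers to Varadarajan's Theorem 9.16 and reduces localizability via the quoted Lemma 9.12 to whether the restriction of $U$ to the Euclidean subgroup $M$ is equivalent to a representation induced from $K^*$ --- exactly your reduction. The explicit Wigner-cocycle and orbit analysis you carry out for $m>0$ and $m=0$ is precisely the content of the supporting lemmas that the paper leaves to the cited reference rather than reproducing.
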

As we mentioned earlier we want to keep in mind there is a finite unitary representation for the space-like little group using the generalized inner product with Hermitian matrix $\sigma_3$. This theorem can be modified for the case of tachyons with non zero mass by considering the stability subgroup at p = (0, 0, 0, m) and selecting a strict $(H^*, X_{im})$-cocycle
C with values in the Hilbert space $\mathscr{K}$ in which $\pi$ acts such that C defines the representation $\pi$ at p. The unitary representation U acts in $\mathscr{H} = \mathscr{L}^2(X_{im}, \mathscr{K}, \beta^+_{im})$, and for $(h, a) \in M$ we have
\begin {equation}
(U_{h, a}
f)(p) = exp[-i\langle a, p \rangle]C(h, \delta(h)^{-1}p)f(\delta(h)^{-1}p).
\end {equation}
The proof for the above theorem depends upon a lemma 9.12 \cite {Varadarajan1985} that provides the necessary and sufficient conditions for existence of projection valued measures.
\begin {lem} Let V be a representation of M in some Hilbert space $\mathscr{H}$. Let m be a real number greater or equal to zero. Then, the existence of a projection valued measure P based on $R^3$ such that (V, P) is a system of imprimitivity, it is necessary and sufficient that there exists a representation s, $s \rightarrow s(h)$ of $K^*$ in Hilbert space $\mathscr{K}$ such that V is equivalent to the representation $V^*$ of M in $\mathscr{L}^2(R^3, \mathscr{K}, \alpha_m)$ defined by 
\begin {align}
(V^s_{h, a}f)(y) &= exp[-i\langle a, y \rangle]s(h)f(\delta(h)^{-1}y). \\
d\alpha_m &= \frac {dy} {2(m^2 + \langle y, y \rangle)^{1/2}}.
\end {align}
\end {lem}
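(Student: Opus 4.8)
The plan is to recognize this lemma as the concrete incarnation of Mackey's imprimitivity theorem, applied to the group $M$ acting on the homogeneous space $R^3$. The first step is to identify $R^3$ $M$-equivariantly with the coset space $M/K^*$: since $M$ acts transitively on the space-only coordinates and $K^*$ is by definition the stability subgroup of the chosen base point, the orbit map descends to a Borel isomorphism $M/K^* \cong R^3$, under which $\alpha_m$ is the quasi-invariant measure whose measure class is preserved by the $M$-action. With this dictionary in place, systems of imprimitivity $(V,P)$ for $M$ based on $R^3$ correspond exactly to systems based on $M/K^*$, and the lemma reduces to the assertion that every such system is equivalent to one induced from a representation $s$ of $K^*$.

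For the sufficiency direction, given a representation $s$ of $K^*$ on $\mathscr{K}$ I would form $V^s$ on $\mathscr{L}^2(R^3, \mathscr{K}, \alpha_m)$ by the displayed formula and define the canonical projection valued measure by multiplication, $(P(E)f)(y) = \mathbf{1}_E(y) f(y)$. That $P$ is a PVM is immediate; the imprimitivity relation $V^s_{h,a} P(E) (V^s_{h,a})^{-1} = P((h,a)\cdot E)$ follows by the change of variables $y \mapsto \delta(h)^{-1} y$ in the integral, using that $\alpha_m$ is invariant under the little-group action (so no Radon--Nikodym factor is needed to keep $V^s$ unitary) and that the phase $e^{-i\langle a, y\rangle}$ commutes with the multiplication operators. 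This settles one implication with essentially no obstruction.

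The necessity direction carries the substance of the argument. Starting from an arbitrary $(V,P)$ with $P$ based on $R^3$, I would use $P$ as a spectral measure to realize $\mathscr{H}$ as a direct integral $\int^\oplus_{R^3} \mathscr{K}_y \, d\mu(y)$ over a measure $\mu$ in the class of $\alpha_m$, with $P$ acting as multiplication by indicator functions. The imprimitivity relation forces each $V_{h,a}$ to carry the multiplication algebra onto its translate, hence to map the fiber over $y$ isomorphically onto the fiber over $(h,a)\cdot y$; transitivity of the $M$-action on $R^3$ then makes all fibers isomorphic to a single $\mathscr{K} := \mathscr{K}_{y_0}$. Choosing a Borel section $c: R^3 \to M$ of the projection $M \to M/K^* \cong R^3$ with $c(y_0)=e$, I would trivialize the field $\{\mathscr{K}_y\}$ and transport $V$ into the model $\mathscr{L}^2(R^3, \mathscr{K}, \alpha_m)$; the residual action of the stabilizer, $h \mapsto c((h,a)\cdot y)^{-1}(h,a)\, c(y)\big|_{y=y_0} \in K^*$, defines the sought representation $s$, and reassembling shows $V \cong V^s$.

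The main obstacle is precisely this last step: producing a genuinely measurable field of unitary isomorphisms trivializing the bundle $\{\mathscr{K}_y\}$ and verifying that the induced fiber action is a bona fide representation of $K^*$ rather than merely a measurable cocycle. This requires the Borel structure of the section $c$, the quasi-invariance of $\alpha_m$ to control the change of measure in the direct integral, and careful bookkeeping of the unitary intertwiners so that associativity of the $M$-action collapses the cocycle to an honest homomorphism $s$. Once $s$ is extracted and $V$ is shown to be unitarily equivalent to $V^s$, the canonical PVM of the sufficiency direction transports back to the given $P$, closing the equivalence.
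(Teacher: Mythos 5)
The paper itself contains no proof of this lemma: it is quoted verbatim from Varadarajan (Lemma 9.12 of \emph{Geometry of Quantum Theory}), so your attempt can only be measured against the standard argument there. Your overall strategy --- Mackey's imprimitivity theorem for a transitive $M$-action, induction from the stabilizer $K^*$, direct-integral decomposition plus a Borel section for necessity --- is indeed the right skeleton and is the one Varadarajan uses. But there is a concrete gap in the sufficiency direction. The variable $y$ in the displayed formula for $V^s$ is a \emph{momentum} variable (witness the measure $d\alpha_m = dy/2(m^2+\langle y,y\rangle)^{1/2}$ and the phase $e^{-i\langle a,y\rangle}$ implementing translation by $a$), whereas the PVM $P$ of the lemma is the localization observable on \emph{position} space and must satisfy $V_{h,a}P(E)V_{h,a}^{-1}=P(\delta(h)E+\hat a)$. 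The multiplication PVM $(P(E)f)(y)=\mathbf{1}_E(y)f(y)$ in the momentum realization commutes with the phases $e^{-i\langle a,y\rangle}$, so your change-of-variables computation only yields $V_{h,a}P(E)V_{h,a}^{-1}=P(\delta(h)E)$: that is a system of imprimitivity for the rotation action with translations acting trivially on the base, not the one the lemma asserts. The correct $P$ is the Newton--Wigner measure, obtained by first mapping $L^2(R^3,\mathscr{K},\alpha_m)$ unitarily onto $L^2(R^3,\mathscr{K},dy)$ via $f\mapsto \bigl(2(m^2+\langle y,y\rangle)^{1/2}\bigr)^{-1/2}f$ and then Fourier transforming to the position realization, where multiplication by indicators \emph{is} covariant under Euclidean motions; this is exactly where the hypothesis $m\ge 0$ (reality of the weight) enters, and your argument never uses it.

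The same omission propagates into your necessity direction: the direct-integral decomposition furnished by $P$ lives over position space, and the transitive action there is the Euclidean one, whose stabilizer at the origin is $K^*$ \emph{together with the time translations} --- so $M/K^*\cong R^4$, not $R^3$, and you must separately dispose of the trivial action of $a_0$ on the base before the cocycle collapses to a representation $s$ of $K^*$ alone. Having extracted $s$ in the position model, you still need the explicit Fourier-transform step to land on the displayed momentum-space formula for $V^s$; without it the equivalence $V\cong V^s$ as stated is not established. In short: the Mackey machinery is deployed correctly in outline, but the passage between the position realization (where $P$ is multiplication) and the momentum realization (where $V^s$ is written) is the substantive content of the lemma, and it is missing from both directions of your argument.
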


\section {Dirac equation}
Let us rewrite the equation \ref {eq: bundle} in the single particle Hilbert space to match the final form available in the literature \cite {Jent2012} and then apply the Fourier transform as we are working with Schwartz spaces the transform is well defined.
\begin {align}
\sum_{r = 0}^3 p_r \gamma_r v &= mv\}, \Gamma v = \mp v. \\
(\sum_{r = 0}^3 ip_r \gamma_r - \Gamma m)v &= 0. \Rightarrow \\
(\sum_{r = 0}^3 i \frac {\partial} {\partial x_r}  \gamma_r - \Gamma m)\Phi &= 0. \text { Schwartz spaces allow Fourier transform}
\end {align}

\section {Causality for Dirac Tachyons} Let us now look at how causality arise from SI relations on space-like particles lifted to second quantization for a simple situation. As we have constructed Weyl operators that form a system of imprimitivity (V, P) on the space $L^2((E^*), \mu) \cong \Gamma_s(\hat{\mathscr{H}}^{+,1/2}_{im})$ by induced an representation of the  Poincar$\grave{e} \mathscr{P}^* = H^*\rtimes \mathbb{R}^4 $ group we have
\begin {align*}
\alpha \rightarrow U_\alpha  &  \\
(U_\alpha f)(x) &= x(a) f(x),  (x \in \mathbb{R}^4). \\
(W_{ah}f)(x) &= x(a)(V_h f)(x), (x \in \mathbb{R}^4). \\
U_\alpha \phi(x) U^{-1}_\alpha &= \phi(\alpha x),  \phi(x) \in \Gamma_s(\hat{\mathscr{H}}^{+,1/2}_{im}).
\end {align*}
Let us start with the anti-commutator of Schwartz distributions as
$[\phi(x), \psi(x)]_+ = \delta(x, y), \phi, \psi \in \Gamma_s(\hat{\mathscr{H}}^{+,1/2}_{im})$ and apply the SI relation to get the anticommutation relation \cite {Schwartz2016}
\begin {align*}
[\phi(x), \psi(x)]_+ &= \delta^3(x, y). \\
\Rightarrow [U_\alpha\phi(x)U^{-1}_\alpha, U_\alpha\psi(x)U^{-1}_\alpha]_+ &= \delta^3(\alpha x, \alpha y) = [\phi(x), \psi(x)]_+ = \delta^3(x, y). \\
\Rightarrow [\phi(x), \psi(x)]_+ &= \delta^3(x - y).
\end {align*}

\
\section {Summary and conclusions}
We derived the covariant field operators for space-like particles using induced representations of groups and  expressed them in terms of systems of imprimitivity. We established the results for the massless Weyl fermion case by inducing a representation of Poincar$\grave{e}$ group from the subgroup that is a stabilizer at the momentum (0,0,1,0). Finally, we derived  the Dirac equation and established localizability for tachyons.


\end{document}